\documentclass[pdflatex,sn-mathphys-num]{sn-jnl}
\usepackage{graphicx}%
\usepackage{multirow}%
\usepackage{mathrsfs}%
\usepackage{amsmath, amssymb, amsfonts}
\usepackage{enumitem}

\usepackage[title]{appendix}%
\usepackage{xcolor}%
\usepackage{textcomp}%
\usepackage{manyfoot}%
\usepackage{pgfplots}
\usepackage{booktabs}%
\usepackage{algorithm}%
\usepackage{subcaption}

\usepackage{algpseudocode}%
\usepackage{listings}%
\usepackage{tikz}

\usetikzlibrary{arrows.meta, positioning, shapes.geometric}

\theoremstyle{thmstyleone}%
\newtheorem{theorem}{Theorem}

\newtheorem{lemma}[theorem]{Lemma}%

\theoremstyle{thmstyletwo}%

\theoremstyle{thmstylethree}%

\raggedbottom

\begin{document}

\title[Article Title]{New Algorithm for Structured OFDM Channel Estimation using Subgroup Duality
}

\author*[1]{\fnm{Demerson N.} \sur{Gonçalves}}\email{demerson.goncalves@cefet-rj}

\author[2]{\fnm{João T.} \sur{Dias}}\email{joao.dias@cefet-rj.br}

\affil*[1]{\orgdiv{Department of Mathematics}, \orgname{CEFET-RJ}, \orgaddress{ \city{Petrópolis}, \postcode{25620-003}, \state{RJ}, \country{Brazil}}}

\affil[2]{\orgdiv{Department of Telecommunication Engineering}, \orgname{CEFET-RJ}, \orgaddress{\city{Rio de Janeiro}, \state{RJ}, \country{Brazil}}}

\abstract{

This paper presents a group-theoretic framework for structured channel estimation in Orthogonal Frequency Division Multiplexing (OFDM).  
By modeling subcarriers as the cyclic group \(\mathbb{Z}_N\), we show that nulling a subgroup \(H \subseteq \mathbb{Z}_N\) constrains the channel impulse response to its annihilator \(H^\perp\) in the dual domain.  
A low-complexity estimator is proposed that detects such structure by evaluating energy concentration across candidate annihilators.  
Simulations demonstrate consistent gains in mean squared error, bit error rate, and throughput compared with least-squares and linear minimum mean square error baselines, achieving competitive performance with substantially lower complexity and preserved interpretability.
}

\keywords{OFDM, channel estimation, group theory, frequency-selective fading}

\maketitle

\section{Introduction}
\label{sec:introduction}

OFDM is the cornerstone of modern wireless communication systems, from 4G LTE to 5G NR, due to its robustness against frequency-selective fading and its efficient implementation via the Fast Fourier Transform (FFT) \cite{Coleri2002, LiStuber2006, Goldsmith_2005}. By dividing a wideband channel into multiple parallel narrowband subchannels, OFDM mitigates inter-symbol interference (ISI) with the aid of a simple cyclic prefix (CP), enabling high-data-rate transmission with manageable receiver complexity.

Nevertheless, conventional OFDM faces significant challenges in high-mobility scenarios and emerging next-generation use cases. In vehicular communications or high-speed rail, for example, the channel becomes doubly-selective, varying rapidly in both time and frequency \cite{Ai2014}. This time-frequency selectivity introduces inter-carrier interference (ICI), compromises the performance of the standard one-tap equalizer, and reduces the reliability of pilot-aided channel estimation as the channel coherence time shrinks \cite{Aoudia2021}. Mitigation strategies often resort to increased pilot density or extended CP, which directly reduce spectral efficiency, a scarce resource in uplink scenarios and particularly critical for envisioned 6G applications requiring extreme mobility and data rates \cite{Chafii2023, Tong2022}.

Several research directions have been pursued to address these limitations. One line seeks to replace OFDM with more Doppler-resilient waveforms, such as Orthogonal Time Frequency Space (OTFS) modulation \cite{Hadani2017}. Another, closer in spirit to our work, explores the \emph{neural augmentation} of OFDM. Deep learning-based receivers, such as Deep-OFDM, learn to perform end-to-end modulation and equalization in pilot-sparse, high-Doppler regimes, achieving impressive performance gains \cite{Ye2017, Honkala2021, Aoudia2021}. While effective, these approaches often act as black boxes, requiring substantial data and computation and offering limited interpretability.

Classical approaches to channel estimation remain foundational. Pilot-aided least squares (LS) and linear minimum mean square error (LMMSE) estimators, along with interpolation strategies, represent the baseline in OFDM receivers \cite{Coleri2002,Barhumi2003,LiStuber2006}. More advanced refinements employ decision-directed updates, time–frequency filtering, or Kalman prediction to track variations \cite{Cho2010}. In parallel, compressed-sensing techniques exploit channel sparsity in the delay–Doppler domain to reduce estimation overhead \cite{Bajwa2010}. These models highlight that practical multipath channels often admit sparse representations, though most works focus on statistical sparsity rather than deterministic structure.

Beyond sparsity, other forms of structural regularity arise naturally. Uniform linear arrays (ULAs) induce spatial periodicity that interacts with OFDM processing, leading to predictable spectral patterns \cite{VanTrees2002, HeathLozano2018}. Reconfigurable Intelligent Surfaces (RIS) create programmable reflection profiles where linear or periodic phase shifts yield comb-like spectral responses with selective nulling across subcarriers \cite{Basar2019, DiRenzo2020}. Such examples indicate that wireless channels frequently exhibit \emph{regular} rather than arbitrary fading patterns.

In this work, we argue that these regularities can be captured through an algebraic perspective. The mathematical foundation builds on viewing the $N$ subcarriers in an OFDM symbol as elements of the cyclic group $\mathbb{Z}_N$, where subcarrier indices operate modulo $N$. Within this framework, a \emph{subgroup} $H \subseteq \mathbb{Z}_N$ corresponds to a uniformly spaced subset of subcarriers that exhibits closure under modular addition. The corresponding \emph{annihilator} $H^\perp \subseteq \widehat{\mathbb{Z}}_N$ represents the set of time-domain delays where the channel impulse response can be non-zero. This subgroup–annihilator duality emerges naturally from Fourier analysis and provides a principled way to model channels that systematically null specific, regularly spaced subcarrier patterns.

We observe that many practical scenarios, including channels influenced by ULAs, RIS, or periodic scattering environments, exhibit such structured nulling. When the frequency response nulls a subgroup $H \subseteq \mathbb{Z}_N$ of subcarriers, the time-domain impulse response becomes constrained to $H^\perp \subseteq \widehat{\mathbb{Z}}_N$. This constraint reduces the effective channel dimension, enabling more efficient estimation with fewer pilots. Our approach leverages this algebraic structure to develop interpretable and computationally efficient channel estimators, offering a mathematically grounded alternative to purely data-driven methods.

Our contributions can be summarized in four main points:
\begin{enumerate}[label=(\roman*)]
    \item We introduce a group-theoretic channel model for OFDM, in which the frequency response nulls a subgroup $H \subseteq \mathbb{Z}_N$, yielding a time-domain impulse response supported on its annihilator $H^\perp$.
    \item We propose a low-complexity estimator that evaluates energy concentration of an IDFT-based estimate over candidate annihilators and applies a threshold criterion for validation.
    \item We connect the algebraic framework to practical engineering scenarios such as indoor corridors, ULAs, and RIS, demonstrating how subgroup structures naturally arise in propagation.
    \item We present simulation results showing consistent gains in MSE, BER, and throughput over standard baselines across diverse OFDM configurations and SNR regimes.
\end{enumerate}

The remainder of this paper is organized as follows. Section ~\ref{sec:physical_motivation} describes the physical motivation and modeling scenarios. Section ~\ref{sec:group_model}  presents the system model and group-theoretic formulation. Section ~\ref{sec:energy_algorithm} details the proposed estimation algorithm. Section \ref{sec:experiments} provides simulation results, and Section \ref{sec:conclusion} concludes the paper.

\section{Physical Motivation and Modeling Scenarios}
\label{sec:physical_motivation}

Conventional OFDM channel estimation typically assumes unstructured fading, treating the channel frequency response as an arbitrary complex vector \cite{Coleri2002,Barhumi2003}. However, numerous real-world propagation environments exhibit deterministic regularities, whether, geometric, architectural, or engineered, that induce systematic patterns in both the delay and frequency domains. These structural constraints align naturally with the subgroup–annihilator duality formalized in Section~\ref{sec:group_model}, providing a principled framework for exploiting such regularities. Below we detail representative scenarios where these algebraic structures emerge from physical propagation mechanisms, supported by both theoretical analysis and empirical observations.

\subsection{Symmetric Multipath in Indoor Channels}

Indoor environments with regular geometries, such as corridors, tunnels, or rectangular rooms, produce characteristic multipath patterns. Parallel reflecting surfaces generate echoes at equally spaced delays, creating an impulse response supported on a periodic lattice $\langle d \rangle \subseteq \mathbb{Z}_N$ \cite{Sayed2003, Molisch2012}. By Fourier duality, this time-domain periodicity manifests as systematic nulls every $d$ subcarriers in the frequency domain, corresponding precisely to the annihilator subgroup $H^\perp = \langle N/d\rangle$.

Experimental studies in waveguide-like environments \cite{Durgin2003} have measured such periodic nulling patterns, with the null spacing determined by the inverse of the dominant delay difference. For a corridor of width $W$, the dominant delay difference is $\Delta\tau = (2W\cos\theta)/c$, producing frequency nulls at intervals $\Delta f = 1/\Delta\tau$. When mapped to OFDM subcarriers, this yields the subgroup structure $H = \langle \lfloor N\Delta\tau/T_s \rfloor \rangle$, where $T_s$ is the OFDM  symbol duration.

\subsection{Uniform Linear Arrays}

Uniform Linear Arrays (ULAs) impose spatial periodicity that translates directly into algebraic channel structure through the array manifold. For a ULA with $M$ elements spaced by $d_a$, the array response vector for a plane wave arriving from angle $\theta$ is:
\begin{equation}
\mathbf{a}(\theta) = [1, e^{-j2\pi d_a\sin\theta/\lambda}, \dots, e^{-j2\pi(M-1)d_a\sin\theta/\lambda}]^T.
\end{equation}
When multiple paths exhibit angular symmetry or are aligned with the array grating lobes, the effective channel impulse response becomes supported on a periodic set $\{n_p\} \subseteq \langle d\rangle$ \cite{VanTrees2002,HeathLozano2018}.

After beamforming or spatial processing, the composite frequency response exhibits $N/d$-periodicity, reflecting redundancy across the annihilator $H_d^\perp$. This structure has been exploited in massive MIMO systems \cite{Rusek2013} where the inherent spatial oversampling creates predictable patterns in the equivalent single-input single-output (SISO) channel.

\subsection{RIS-Induced Comb-Like Spectra}

Reconfigurable Intelligent Surfaces (RIS) enable deliberate engineering of channel responses through programmable reflection profiles. By applying periodic phase gradients $\phi_m = 2\pi m/q + \phi_0$ across RIS elements, one can create spectral combs that selectively activate subcarriers in the subgroup $\langle q\rangle$ \cite{Basar2019,DiRenzo2020}. 

The composite channel, incorporating both direct and RIS-reflected paths, exhibits time-domain support restricted to $H_q^\perp = \langle N/q\rangle$. Recent experimental demonstrations \cite{Tang2021} have verified this comb-like spectral shaping, showing that RIS can indeed impose algebraic structure on wireless channels. This programmability transforms RIS from merely a beamforming tool into a channel structure engineering platform, opening new possibilities for structured estimation and equalization.

\subsection{Additional Structured Scenarios}

Beyond these primary cases, several other scenarios exhibit similar algebraic regularity:
\begin{enumerate}[label=(\roman*)]
\item \textbf{Velocity-induced periodicity}: In high-mobility scenarios with constant velocity, the Doppler spectrum becomes concentrated around specific frequencies, creating periodic time variations that translate into structured frequency-domain patterns \cite{Zhou2020}.

\item \textbf{Intelligent reflecting surfaces}: Beyond RIS, other electromagnetic skin technologies \cite{Wu2022} can impose geometric constraints on propagation paths, leading to structured sparsity in angular and delay domains.

\item \textbf{Acoustic and underwater channels}: In bounded media with strong waveguide effects, modal propagation creates inherent frequency-domain periodicity \cite{Stojanovic2008}.
\end{enumerate}

These diverse examples demonstrate that the subgroup structure proposed in this work is not merely an algebraic abstraction but physically realizable across multiple domains. Whether through natural geometric symmetries, engineered antenna arrays, or programmable metasurfaces, the resulting sparsity and periodicity patterns can be rigorously described using finite-group duality, providing a unified framework for exploiting structural regularities in channel estimation.

\section{Group-Theoretic Channel Model}
\label{sec:group_model}

This section develops a unified group-theoretic framework for modeling structured OFDM channels. 
While conventional OFDM systems are well understood, our goal is to reinterpret their frequency–time correspondence through the lens of algebraic duality, 
providing an explicit connection between periodic spectral patterns and sparse temporal supports. 
We begin by recalling the standard OFDM system model and then present its algebraic formulation.

\subsection{OFDM System Model}

Consider a standard OFDM system with \(N\) subcarriers transmitting complex symbols
\(\{X_k\}_{k=0}^{N-1}\).
The time-domain signal after inverse DFT (IDFT) is
\begin{equation}
x[n] = \frac{1}{\sqrt{N}} \sum_{k=0}^{N-1} X_k e^{j 2\pi k n / N},
\quad n = 0, 1, \ldots, N-1,
\label{eq:ofdm_tx_signal}
\end{equation}
where the normalization factor ensures a unitary DFT pair.  
The signal \(x[n]\) is transmitted through a linear time-invariant (LTI) channel
with impulse response \(h[n]\) of length \(L_h < N\),
and is corrupted by additive white Gaussian noise \(w[n]\).

After cyclic prefix (CP) insertion and removal,
the circular convolution between \(x[n]\) and \(h[n]\) is preserved provided that
\begin{equation}
N_{\mathrm{CP}} \ge L_h,
\label{eq:cp_condition}
\end{equation}
ensuring orthogonality among subcarriers and avoiding intersymbol interference (ISI).  
After DFT demodulation, the received symbols satisfy
\begin{equation}
Y[k] = H[k] X_k + W[k], 
\quad k = 0, 1, \ldots, N-1,
\label{eq:ofdm_freq_model}
\end{equation}
where
\(
H[k] = \sum_{n=0}^{L_h-1} h[n] e^{-j 2\pi k n / N}
\)
is the frequency response of the channel.

Figure~\ref{fig:ofdm-group-estimator} illustrates the complete system model.
The proposed \textit{Group-Based Channel Estimation} module (in green)
exploits the subgroup–annihilator structure of the channel.
A cyclic subgroup \(H = \{0, d, 2d, \dots\}\) of periodically null subcarriers
corresponds to a structured time-domain support
(\(\mathrm{supp}(h) \subseteq H^\perp = \langle N/d \rangle\)),
enabling efficient and interpretable channel recovery.

\begin{figure}[ht]
\centering
\begin{tikzpicture}[
    block/.style={rectangle, draw, fill=blue!10, rounded corners=1pt,
                  minimum width=2.0cm, minimum height=0.9cm, align=center},
    channel/.style={rectangle, draw, fill=red!10, rounded corners=1pt,
                    minimum width=2.1cm, minimum height=0.9cm, align=center},
    proposed/.style={rectangle, draw, fill=green!20, rounded corners=1pt,
                     minimum width=2.5cm, minimum height=1cm, align=center},
    arrow/.style={-Stealth, thick},
    node distance=0.9cm and 1.2cm
]

\node (input) {$b$};
\node[below=3.2cm of input] (output) {$\hat{b}$};

\node[block, right=of input] (map) {Mapping};
\node[block, right=of map] (idft) {IDFT};
\node[block, right=of idft] (addcp) {Add CP};
\node[block, right=of addcp] (dac) {DAC};

\node[channel, below=of dac] (channel) {Channel\\(+ Noise)};

\node[block, below=of channel] (adc) {ADC};
\node[block, left=of adc] (rmcp) {Remove CP};
\node[block, left=of rmcp] (dft) {DFT +\\Equalizer};
\node[block, left=of dft] (demap) {Demapping};

\node[proposed, below=0.8cm of rmcp] (est) {Group-Based\\Channel Estimation};

\draw[arrow] (input) -- (map);
\draw[arrow] (map) -- node[above] {$s$} (idft);
\draw[arrow] (idft) -- node[above] {$x[n]$} (addcp);
\draw[arrow] (addcp) -- (dac);
\draw[arrow] (dac) -- (channel);
\draw[arrow] (channel) -- (adc);
\draw[arrow] (adc) -- (rmcp);
\draw[arrow] (rmcp) -- node[above] {$y[n]$} (dft);
\draw[arrow] (dft) -- node[above] {$\tilde{s}$} (demap);
\draw[arrow] (demap) -- (output);

\draw[arrow] (rmcp.south) -- (est.north);
\draw[arrow] (est.west) -| (dft.south);

\node[above=0.2cm of map, font=\small\bfseries, text=blue!70!black] {Transmitter};
\node[below=0.2cm of demap, font=\small\bfseries, text=blue!70!black] {Receiver};

\node[above=1.8cm of dft, font=\scriptsize\itshape, text=blue!60!black, align=center]
{Frequency domain: nulls in $H = \{0,d,2d,\dots\}$};

\node[below=0.1cm of est, font=\scriptsize\itshape, text=green!50!black, align=center]
{Time-domain support: $\mathrm{supp}(h)\subseteq H^\perp = \langle N/d\rangle$};

\node[below left=3cm and -0.7cm of adc, font=\scriptsize, align=left] {
\textcolor{blue!70!black}{\colorbox{blue!10}{\rule{0.15cm}{0.cm}}} OFDM blocks\quad
\textcolor{red!70!black}{\colorbox{red!15}{\rule{0.15cm}{0.cm}}} Channel + Noise\quad
\textcolor{green!60!black}{\colorbox{green!25}{\rule{0.15cm}{0.cm}}} Proposed estimator
};
\end{tikzpicture}
\caption{Block diagram of the OFDM system with group-based channel estimation.}
\label{fig:ofdm-group-estimator}
\end{figure}
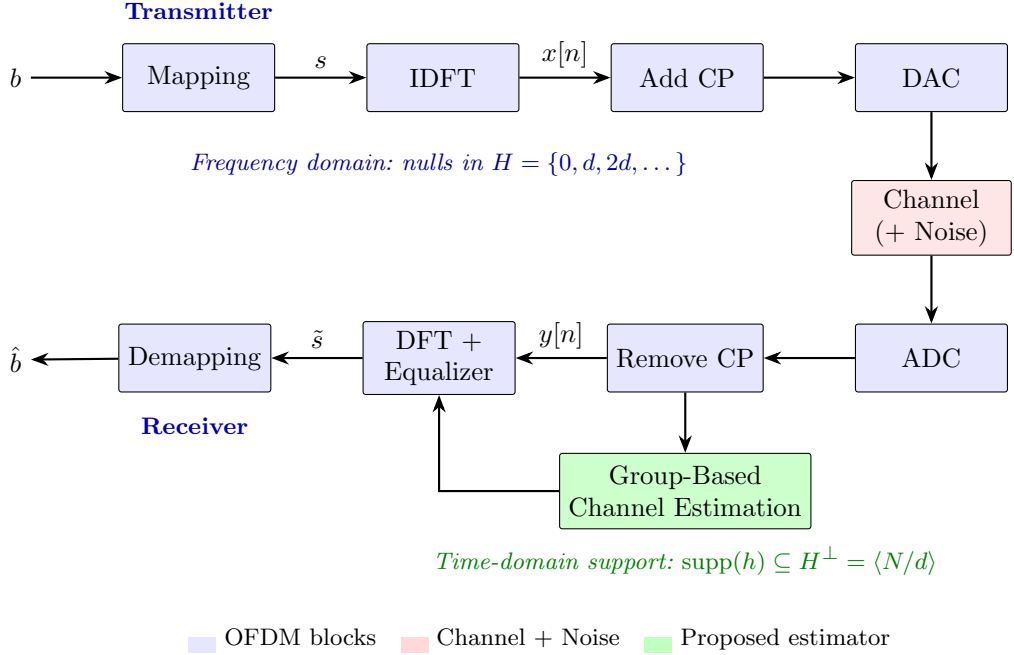

\subsection{Group-Theoretic Representation of OFDM}

To formalize the structure observed in OFDM channels, we recall a few basic definitions.  
A \emph{group} \( (G,+) \) is a set endowed with an associative binary operation, possessing an identity element and inverses.  
A \emph{subgroup} \( H \le G \) is a subset that is itself a group under the same operation.  
If the group operation is commutative, that is, \(a+b=b+a\) for all \(a,b\in G\), the group is said to be \emph{abelian}.  
A group is \emph{cyclic} if there exists an element \(g\in G\) such that every element of \(G\) can be written as a multiple (or power) of \(g\); we write \(G=\langle g\rangle\) and call \(g\) a \emph{generator}.  
A \emph{homomorphism} is a structure-preserving map between groups, i.e., 
\(\phi(a+b)=\phi(a)\phi(b)\).  
A \emph{character} of a finite abelian group is a homomorphism from \(G\) to the multiplicative unit circle in \(\mathbb{C}\).  
For the cyclic group \( \mathbb{Z}_N = \langle 1 \rangle \), the characters take the exponential form
\begin{equation}
\chi_k(n)=e^{2\pi i k n/N},
\label{eq:character}
\end{equation}
which directly corresponds to the DFT basis used in OFDM.

An OFDM system partitions the available bandwidth into \(N\) orthogonal subcarriers indexed by \(\{0, 1, \ldots, N-1\}\).  
These indices can be naturally modeled by the additive cyclic group \(\mathbb{Z}_N\), which captures the modular arithmetic inherent to subcarrier indexing.  
A \emph{subgroup} \(H \subseteq \mathbb{Z}_N\) represents a periodic subset of subcarriers, and hence provides a compact description of regular spectral structures.

\subsection{Subgroup–Annihilator Duality}
\label{sec:annihilator_duality}

Let \(G=\mathbb{Z}_N\) denote the additive cyclic group and \(\widehat{G}\) its dual, composed of characters \(\chi:G \to \mathbb{C}\) as defined in Eq.~(\ref{eq:character}).  
For any subgroup \(H\le G\), its \emph{annihilator} is defined as
\begin{equation}
H^\perp \triangleq \{\,\chi\in\widehat{G}:\chi(h)=1,\ \forall\,h\in H\,\},
\label{eq:annihilator_char_def}
\end{equation}
namely, the set of characters that act trivially on \(H\).

In the finite cyclic case, the dual group \(\widehat{G}\) is canonically isomorphic to \(\mathbb{Z}_N\) via the character family of Eq.~(\ref{eq:character}),
under which the annihilator takes the index form
\begin{equation}
H^\perp \cong \{\,k\in\mathbb{Z}_N:\ e^{\frac{2\pi i}{N}kh}=1,\ \forall\,h\in H\,\}.
\label{eq:annihilator_index_form}
\end{equation}
If \(H=\langle d\rangle=\{0,d,2d,\dots\}\) with \(d\mid N\), then
\begin{equation}
H^\perp \cong \langle N/d\rangle=\{0,\,N/d,\,2N/d,\,\dots\},
\qquad |H|=\frac{N}{d},\ \ |H^\perp|=d.
\label{eq:annihilator_cyclic_case}
\end{equation}
Pontryagin duality guarantees the bidual identity
\begin{equation}
(H^\perp)^\perp = H,
\label{eq:bidual_identity}
\end{equation}
expressing the perfect algebraic symmetry between a subgroup and its annihilator.

In the OFDM setting, identifying \(\widehat{G}\cong\mathbb{Z}_N\) allows characters to be indexed by frequency bins \(k\).  
A subgroup \(H=\langle d\rangle\) representing periodically null (or suppressed) frequency components corresponds, through Fourier duality, to a time-domain channel whose nonzero coefficients are confined to \(H^\perp=\langle N/d\rangle\).  
Conversely, a short time-domain support induces periodic spectral patterns with zeros across frequency cosets of \(\langle d\rangle\).  
This duality provides the theoretical foundation for the proposed estimator, which searches nested annihilator subspaces to capture at least a \((1-\varepsilon)\) fraction of the total channel energy using the smallest possible support. Figure~\ref{fig:duality_illustration} illustrates this correspondence for \(N=12\) and \(d=3\),
where periodic nulls in frequency (\(H=\{0,3,6,9\}\))
map to a sparse time-domain support confined to
\(H^\perp=\{0,4,8\}\).


\begin{figure}[t]
\centering
\begin{tikzpicture}[scale=0.95, every node/.style={font=\small}]

\draw[->] (-0.3,0) -- (6.8,0) node[right] {Frequency indices $k \in \mathbb{Z}_{12}$};

\foreach \x in {0,0.55,...,6.05}
  \filldraw[gray!60] (\x,0) circle (1.3pt);

\foreach \x in {0,1.65,3.3,4.95}
  \filldraw[blue!70!black] (\x,0) circle (2.3pt);

\node[below, blue!70!black] at (3.2,-0.5)
  {$H = \langle 3 \rangle = \{0,3,6,9\}$  (periodic spectral nulls)};

\draw[->] (-0.3,-3) -- (6.8,-3) node[right] {Time indices $n \in \mathbb{Z}_{12}$};

\foreach \x in {0,0.55,...,6.05}
  \filldraw[gray!60] (\x,-3) circle (1.3pt);

\foreach \x in {0,2.2,4.4}
  \filldraw[red!80!black] (\x,-3) circle (2.3pt);

\node[below, red!80!black] at (3.2,-3.55)
  {$H^{\perp} = \langle 4 \rangle = \{0,4,8\}$  (sparse time-domain support)};

\draw[->, thick, dashed, gray!60!black] (3.2,-0.3) -- (3.2,-2.7)
  node[midway, right, black!70] {$\mathcal{F}^{-1}$};
\draw[->, thick, dashed, gray!60!black] (4.6,-2.7) -- (4.6,-0.3)
  node[midway, right, black!70] {$\mathcal{F}$};

\node[below right, font=\scriptsize, align=left, black!60] at (-0.3,-4.0) {
\textcolor{blue!70!black}{\rule{0.3cm}{0.1cm}} Frequency-domain subgroup $H$ \quad
\textcolor{red!80!black}{\rule{0.3cm}{0.1cm}} Time-domain annihilator $H^\perp$
};

\end{tikzpicture}
\caption{Subgroup–annihilator duality in OFDM for $N=12$ and $d=3$.
}
\label{fig:duality_illustration}
\end{figure}
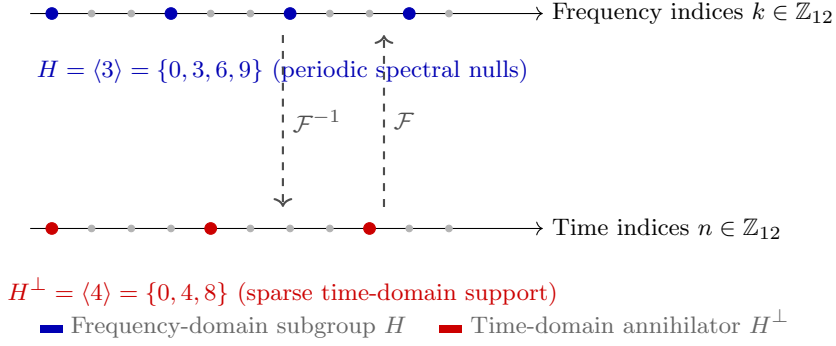

\section{Energy-Constrained Subgroup Estimation Algorithm}
\label{sec:energy_algorithm}

The proposed estimation algorithm exploits the algebraic structure of the channel to perform structured and energy-aware recovery. 
The key principle is that most of the channel energy is concentrated within a small annihilator subspace associated with a subgroup of \(\mathbb{Z}_N\). 
We first formalize this concentration property and then translate it into a constructive estimation rule.

Given the channel impulse response \(h = (h_0,\dots,h_{L-1}) \in \mathbb{C}^L\)
and a subgroup \(H^\perp \subseteq \mathbb{Z}_N\) representing its assumed time-domain support,
define the partial and total energy as
\begin{equation}
E_{H^\perp} = \sum_{n\in H^\perp} |h_n|^2,
\qquad
E_{\mathrm{total}} = \sum_{n=0}^{N-1} |h_n|^2.
\label{eq:energy_concentration_def}
\end{equation}

\begin{lemma}[Energy Concentration]\label{energy_concentration_lemma}
Let \(h\in\mathbb{C}^L\) be a channel whose time-domain energy is approximately sparse.
For any tolerance parameter \(\varepsilon \in (0,1)\),
there exists a minimal subgroup \(H_\varepsilon^\perp \subseteq \mathbb{Z}_N\)
such that
\begin{equation}
E_{H_\varepsilon^\perp} \ge (1-\varepsilon)\,E_{\mathrm{total}}.
\label{eq:energy_concentration}
\end{equation}
Moreover, \(H_\varepsilon^\perp\) corresponds to the smallest annihilator subspace
that preserves at least a fraction \((1-\varepsilon)\) of the total channel energy.
\end{lemma}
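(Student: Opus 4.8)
The plan is to treat this as a finiteness argument on the subgroup lattice of $\mathbb{Z}_N$: the approximate-sparsity hypothesis is what makes the resulting support small in applications, but it is not needed for the existence claim itself, which follows from the mere non-emptiness and finiteness of the relevant family of subgroups. First I would recall from Section~\ref{sec:annihilator_duality} that the subgroups of $\mathbb{Z}_N$ are exactly the cyclic groups $\langle d\rangle$ indexed by the divisors $d\mid N$, and that the annihilator map $\langle d\rangle\mapsto\langle N/d\rangle$ from Eq.~(\ref{eq:annihilator_cyclic_case}) is an order-reversing bijection of this lattice onto itself, since $\langle d_1\rangle\subseteq\langle d_2\rangle\iff d_2\mid d_1\iff\langle N/d_2\rangle\subseteq\langle N/d_1\rangle$ and $d\mapsto N/d$ permutes the divisors. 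By the bidual identity~(\ref{eq:bidual_identity}), every subgroup of $\mathbb{Z}_N$ is therefore itself an annihilator, so ``annihilator subspace'' and ``subgroup of $\mathbb{Z}_N$'' are interchangeable in what follows.

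Next I would introduce the feasible family
\[
\mathcal{S}_\varepsilon=\bigl\{\,H^\perp\le\mathbb{Z}_N:\ E_{H^\perp}\ge(1-\varepsilon)\,E_{\mathrm{total}}\,\bigr\},
\]
with $E_{H^\perp}$ and $E_{\mathrm{total}}$ as in Eq.~(\ref{eq:energy_concentration_def}), and observe that $\mathcal{S}_\varepsilon\neq\varnothing$ because the full group $\mathbb{Z}_N=\langle 1\rangle$ --- the annihilator of the trivial subgroup $\{0\}$ --- satisfies $E_{\mathbb{Z}_N}=E_{\mathrm{total}}\ge(1-\varepsilon)E_{\mathrm{total}}$. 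Since $\mathbb{Z}_N$ has only finitely many subgroups (one for each divisor of $N$), $\mathcal{S}_\varepsilon$ is a non-empty finite set, so it contains an element $H_\varepsilon^\perp$ of minimum cardinality; write $|H_\varepsilon^\perp|=d_\varepsilon$. This $H_\varepsilon^\perp$ is the claimed minimal subgroup: any subgroup properly contained in it has strictly smaller order, hence lies outside $\mathcal{S}_\varepsilon$, i.e. captures strictly less than a $(1-\varepsilon)$ fraction of $E_{\mathrm{total}}$; this is exactly the ``moreover'' assertion that $H_\varepsilon^\perp$ is the smallest annihilator subspace preserving at least that energy fraction, and in particular $H_\varepsilon^\perp$ is minimal in the inclusion order restricted to $\mathcal{S}_\varepsilon$.

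Finally I would record the link back to the hypothesis: if $h$ is $\varepsilon$-concentrated on some $\langle N/d_0\rangle$ in the sense that $\sum_{n\notin\langle N/d_0\rangle}|h_n|^2\le\varepsilon\,E_{\mathrm{total}}$, then $\langle N/d_0\rangle\in\mathcal{S}_\varepsilon$ and consequently $d_\varepsilon\le d_0$, so a genuinely sparse or periodic channel is captured by a correspondingly small annihilator. The argument has no deep obstacle; the single point requiring care is that the subgroup lattice of $\mathbb{Z}_N$ is a chain only when $N$ is a prime power, so for general $N$ there may be several incomparable minimal members of $\mathcal{S}_\varepsilon$. Selecting $H_\varepsilon^\perp$ by minimum cardinality (equivalently, minimum support size) sidesteps this and yields a well-defined smallest support even when the minimizing subgroup is not unique; if a canonical choice is wanted one may additionally break ties by maximizing $E_{H^\perp}$ among the minimum-cardinality candidates.
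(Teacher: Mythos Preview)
Your argument is correct. Both your proof and the paper's reduce to the same finiteness observation: the family of annihilator subgroups satisfying the energy constraint is nonempty (since $\mathbb{Z}_N$ itself always qualifies) and finite (one per divisor of $N$), so a minimum-cardinality member exists. The paper, however, packages this through orthogonal projectors $P_d$ onto the coordinate subspaces $\mathcal{U}_d$ supported on $H_d^\perp$, rewriting the energy condition as a relative projection-error bound $\|h-P_d h\|_2^2/\|h\|_2^2\le\varepsilon$ and invoking monotonicity of $d\mapsto\|P_d h\|_2^2$ along the divisibility order before selecting the smallest feasible $d^\ast$. Your route is more elementary in that it bypasses the projection machinery entirely, and it is also more careful on one point the paper glosses over: you explicitly flag that the divisor lattice is a chain only for prime-power $N$, so ``smallest'' must be read as minimum cardinality (with a possible tie-break) rather than as a unique minimum in the inclusion order. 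What the paper's formulation buys is the projection-error identity itself, which is not needed for the lemma but is reused verbatim in the subsequent Energy--MSE analysis (Section~\ref{sec:energy_mse}); your proof would require reintroducing $P_d$ there.
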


\begin{proof}
Let \(\mathcal{U}_d \subset \mathbb{C}^N\) denote the subspace of vectors
whose nonzero components are confined to the annihilator subgroup
\(H_d^\perp = \{\,0, N/d, 2N/d, \dots, (d-1)N/d\,\}\).
Let \(P_d : \mathbb{C}^N \to \mathcal{U}_d\) be the orthogonal projector onto that subspace,
so that
\[
(P_d h)[n] =
\begin{cases}
h[n], & n \in H_d^\perp,\\[2pt]
0, & \text{otherwise.}
\end{cases}
\]
The projection error satisfies
\begin{equation}
\|h - P_d h\|_2^2 = \|h\|_2^2 - \|P_d h\|_2^2,
\label{eq:projection_error}
\end{equation}
where \(\|P_d h\|_2^2 = E_{H_d^\perp}\) and \(\|h\|_2^2 = E_{\mathrm{total}}\).
Hence,
\[
\frac{E_{H_d^\perp}}{E_{\mathrm{total}}}
= 1 - \frac{\|h - P_d h\|_2^2}{\|h\|_2^2}.
\]
Imposing \(E_{H_d^\perp} / E_{\mathrm{total}} \ge 1 - \varepsilon\)
is, therefore, equivalent to bounding the relative projection error as
\begin{equation}
\frac{\|h - P_d h\|_2^2}{\|h\|_2^2} \le \varepsilon.
\label{eq:relative_error_bound}
\end{equation}
Since the subspaces \(\mathcal{U}_d\) form a nested sequence
(\(\mathcal{U}_{d_1} \subseteq \mathcal{U}_{d_2}\) whenever \(d_1 \mid d_2\)),
the mapping \(d \mapsto \|P_d h\|_2^2\) is monotone nondecreasing.
Thus, there exists a smallest \(d^\ast\) such that
\[
\|P_{d^\ast} h\|_2^2 \ge (1-\varepsilon)\|h\|_2^2.
\]
Setting \(H_\varepsilon^\perp = H_{d^\ast}^\perp\) yields the desired subgroup
that captures at least \((1-\varepsilon)\) of the total energy
with minimal support, completing the proof.
\end{proof}

The above lemma provides the theoretical foundation for constructing an efficient search procedure. 
Let the pilot received in the frequency-domain be \(Y[k_p]\) and the transmitted symbols \(X[k_p]\). 
The least-squares frequency response estimate is
\[
\hat{H}[k_p] = \frac{Y[k_p]}{X[k_p]},
\]
and its time-domain counterpart is obtained via the inverse DFT:
\[
\hat{h}[n] = \mathrm{IDFT}\{\hat{H}[k]\}.
\]
For each divisor \(d \mid N\), define the candidate annihilator subgroup \(H_d^\perp = \langle N/d \rangle\)
and compute the normalized energy ratio
\begin{equation}
R_d = \frac{\sum_{n \in H_d^\perp} |\hat{h}[n]|^2}{\sum_{n=0}^{N-1} |\hat{h}[n]|^2}.
\label{eq:energy_ratio}
\end{equation}
The optimal subgroup corresponds to the smallest \(d\) satisfying \(R_d > 1 - \varepsilon\),
as guaranteed by Lemma~\ref{energy_concentration_lemma}. 
The complete estimation procedure is summarized in Algorithm~\ref{alg:channel-estimation}.

\begin{algorithm}[t]
\caption{Energy-Constrained Subgroup Channel Estimation}
\label{alg:channel-estimation}
\begin{algorithmic}[1]
\Require Received symbols $\{Y[k]\}$, transmitted symbols $\{X[k]\}$, number of subcarriers $N$, threshold $\varepsilon$
\Ensure Estimated annihilator $H_{\text{best}}$, sparse impulse response $\hat{h}_{\text{sparse}}[n]$
\State $\hat{H}[k] \gets Y[k]/X[k]$
\State $\hat{h}[n] \gets \mathrm{IDFT}\{\hat{H}[k]\}$
\State $E_{\mathrm{total}} \gets \sum_{n=0}^{N-1} |\hat{h}[n]|^2$
\For{each divisor $d \mid N$ (in increasing order)}
    \State $H_d^\perp \gets \{\, n \in \mathbb{Z}_N : n \bmod (N/d) = 0 \,\}$
    \State $R_d \gets \frac{\sum_{n \in H_d^\perp} |\hat{h}[n]|^2}{E_{\mathrm{total}}}$
    \If{$R_d > (1 - \varepsilon)$}
        \State $H_{\text{best}} \gets H_d^\perp$
        \State \textbf{break}
    \EndIf
\EndFor
\If{$H_{\text{best}}$ undefined}
    \State $\hat{h}_{\text{sparse}}[n] \gets 0$
\Else
    \State $\hat{h}_{\text{sparse}}[n] \gets
           \begin{cases}
             \hat{h}[n], & n \in H_{\text{best}},\\
             0, & \text{otherwise.}
           \end{cases}$
\EndIf
\State \Return $H_{\text{best}}, \hat{h}_{\text{sparse}}[n]$
\end{algorithmic}
\end{algorithm}

The criterion used in Algorithm~\ref{alg:channel-estimation} 
not only identifies the most compact annihilator subspace that retains a prescribed fraction of the total channel energy,
but also implicitly minimizes the mean-squared error (MSE) of the structured estimate.
This analytical connection between energy preservation and estimation accuracy
is made explicit in the following subsection.




\subsection{Energy–MSE Relation}
\label{sec:energy_mse}

The relationship between energy concentration in annihilator subspaces and the mean-squared error (MSE) of structured channel estimates provides the theoretical foundation of the proposed framework. 
This section establishes the fundamental energy–MSE duality that justifies the subgroup selection rule in Algorithm~\ref{alg:channel-estimation} and extends it to practical scenarios involving noisy least-squares (LS) estimation.

Consider a channel impulse response $h \in \mathbb{C}^L$ and its unconstrained LS estimate $\hat{h}$. 
Let $\hat{h}_{H^\perp_\varepsilon}$ denote the orthogonal projection of $\hat{h}$ onto the selected annihilator subspace $H^\perp_\varepsilon$. 
The total and partial energies are given by $E_{\mathrm{total}}$ and $E_{H^\perp_\varepsilon}$ as defined in Eq.~\eqref{eq:energy_concentration_def}.  

Under ideal conditions ($\hat{h} = h$), the orthogonality of the projection implies that the residual error energy equals the portion of $h$ lying outside $H^\perp_\varepsilon$. 
Consequently, the MSE of the structured estimate satisfies
\begin{equation}
\mathrm{MSE}(H^\perp_\varepsilon)
= \frac{1}{N}\big(E_{\mathrm{total}} - E_{H^\perp_\varepsilon}\big).
\label{eq:mse_energy_relation}
\end{equation}
Applying Lemma~\ref{energy_concentration_lemma} yields a worst-case MSE guarantee: 
if $E_{H^\perp_\varepsilon} \ge (1-\varepsilon)E_{\mathrm{total}}$, then
\begin{equation}
\mathrm{MSE}(H^\perp_\varepsilon)
\le
\frac{\varepsilon}{N}E_{\mathrm{total}}.
\label{eq:mse_bound}
\end{equation}
Hence, maximizing captured energy within $H^\perp_\varepsilon$ is equivalent to minimizing the MSE over the family of annihilator-constrained estimates. 
This establishes the energy–MSE duality that underlies Algorithm~\ref{alg:channel-estimation}.

In realistic conditions, the LS estimate is corrupted by noise: $\hat{h} = h + w$, where $w \sim \mathcal{CN}(0, \sigma^2 I)$ represents complex Gaussian estimation noise. 
The projected estimate is then
\[
\hat{h}_{H^\perp_\varepsilon} = P_{H^\perp_\varepsilon}\hat{h}
= P_{H^\perp_\varepsilon}h + P_{H^\perp_\varepsilon}w,
\]
where $P_{H^\perp_\varepsilon}$ denotes the orthogonal projector onto $H^\perp_\varepsilon$. 
Taking expectations over the noise yields
\begin{equation}
\mathbb{E}\big[\|h - \hat{h}_{H^\perp_\varepsilon}\|_2^2\big]
= (E_{\mathrm{total}} - E_{H^\perp_\varepsilon}) + \sigma^2 d^\ast,
\label{eq:mse_noise_decomp}
\end{equation}
where $d^\ast = |H^\perp_\varepsilon|$ is the subspace dimension. 
Normalizing by $N$, the expected MSE becomes
\begin{equation}
\mathrm{MSE}(H^\perp_\varepsilon)
= \frac{1}{N}\big(E_{\mathrm{total}} - E_{H^\perp_\varepsilon}\big)
+ \frac{\sigma^2 d^\ast}{N}.
\label{eq:mse_noisy}
\end{equation}

Equation~\eqref{eq:mse_noisy} highlights the trade-off between energy capture and noise amplification: 
larger subspaces increase $E_{H^\perp_\varepsilon}$ but also the additive term proportional to $d^\ast$. 
The optimal annihilator $H^\perp_\varepsilon$ therefore maximizes
\[
E_{H^\perp_\varepsilon} - \sigma^2 d^\ast,
\]
which is equivalent to minimizing the MSE. 
Algorithm~\ref{alg:channel-estimation} effectively implements this criterion by selecting the smallest annihilator that captures at least a $(1-\varepsilon)$ fraction of the total energy, thereby ensuring both theoretical optimality and robustness in noisy environments.

The established energy–MSE relationship thus provides a unified theoretical and practical justification for the proposed estimator. 
In ideal conditions, it guarantees exact optimality; under noise, it ensures near-optimal performance with graceful degradation controlled by $\sigma^2$ and the chosen threshold $\varepsilon$.

\subsection{Computational Complexity}

The computational complexity of channel estimation methods is a critical practical consideration for real-time OFDM systems. In the proposed group-based approach, the number of distinct subgroups of \(\mathbb{Z}_N\), which directly corresponds to the number of candidate annihilators \(H^\perp\), is governed by the divisor function \(\tau(N)\), the number of positive divisors of \(N\). This function exhibits unique growth properties: it attains its minimum \(\tau(N) = 2\) when \(N\) is prime, and for composite \(N\), its growth is subpolynomial, bounded by \(\tau(N) = O(N^\epsilon)\) for any \(\epsilon > 0\). Crucially, the average behavior of \(\tau(N)\) is far more restrained, satisfying the asymptotic relation \(\mathbb{E}_{N \leq x}[\tau(N)] \sim \log x\), which implies that the number of subgroup candidates grows logarithmically as \(N\) scales.

Each candidate subgroup \(H_d\) is uniquely defined by a divisor \(d \mid N\), and its annihilator \(H_d^\perp\) serves as the search domain for energy concentration. Since each evaluation requires \(O(N)\) operations, the total computational complexity of the estimation process is \(O(\tau(N) \cdot N)\). This advantage arises from the algebraic organization of the problem: instead of examining arbitrary subsets of indices, the estimation is restricted to annihilator subgroups arising from the divisors of \(N\), restricting the hypothesis space to a structured family of candidates grounded in group theory.

This complexity profile compares favorably with established benchmarks. The classical least squares (LS) estimator achieves \(O(N \log N)\) complexity through efficient FFT-based interpolation, while the minimum mean square error (MMSE) estimator typically requires \(O(N^3)\) operations for solving linear systems with channel covariance matrices.
The Table \ref{tab:complexity} summarizes the complexity of the tested estimators.

\begin{table}[ht]
\centering
\caption{Computational complexity comparison of channel estimation methods}
\label{tab:complexity}
\begin{tabular}{l|l|l}
\hline
\textbf{Method} & \textbf{Complexity} & \textbf{Characteristics} \\
\hline
LS & $O(N \log N)$ & Efficient but noise-sensitive \\
LMMSE & $O(N^3)$ & Optimal but computationally intensive \\
Group-Based & $O(\tau(N) \cdot N)$ & Structure-exploiting, near-linear \\
\hline
\end{tabular}
\end{table}

\section{Numerical Experiments}
\label{sec:experiments}

We evaluate the proposed subgroup–based channel estimator in a controlled OFDM simulation
designed to highlight how algebraic structure influences estimation accuracy and link performance. Two scenarios with different channel characteristics were chosen for algorithm performance analysis: a tapped-delay-line (TDL) channel model for tunnels~\cite{9305216} and the ITU Channel Model for Indoor Office~\cite{ITU-R:M.1225}. Initially, to analyze an idealized scenario, the TDL channel model was configured with spacing coinciding with $H^\perp$ and average power with exponential decay $(e^{-03\tau_i})$, the ITU Channel Model has spacing not coincident with $H^\perp$, which shows the worst-case scenario, it has a relative delay $[0~ 100~ 200~ 300~ 500~ 700] (ns)$ and average power $[0~ -3.6~ -7.2~ -10.8~ -18.0~ -25.2] (dB)$.

The system employs \( N = 256 \) subcarriers and QPSK modulation,
with the signal–to–noise ratio sampled on the grid
\(\text{SNR}_{\mathrm{dB}} \in \{0,5,10,\ldots,25\}\).
Structural variability is introduced through the generator set
\(\mathcal{D} = \{2,8,16,64,128\}\),
where each \(d\) defines a cyclic subgroup \(H = \langle d \rangle \subseteq \mathbb{Z}_N\)
and its annihilator \(H^\perp = \langle N/d \rangle\).
This pair induces a specific sparsity pattern across subcarriers,
governing the fraction of active tones \(\eta_d = 1 - 1/d\),
which also impacts the achievable throughput.
The complete simulation configuration is summarized in Table~\ref{tab:sim_config}.

\begin{table}[ht!]
\centering
\caption{Simulation parameters used in the numerical experiments.}
\label{tab:sim_config}
\begin{tabular}{l c l}
\toprule
\textbf{Parameter} & \textbf{Symbol} & \textbf{Value / Description} \\ \midrule
Number of OFDM subcarriers & $N$ & 256 \\
OFDM symbol duration & – & $12.8 \mu s$ \\
Cyclic prefix duration & $n_{\mathrm{CP}}$ & $N/8 = 32$ samples ($12.5\%$ of symbol) \\
Modulation schemes & – & QPSK \\
Annihilator subgroup sizes & $d$ & \{2, 8, 16, 64, 128\} \\
Threshold parameter & $\varepsilon$ & 0.15 \\
SNR range & – & 0–25 dB (step = 5 dB) \\
Monte Carlo iterations & $K(d)$ & 100–300 (adaptive to sparsity) \\
Estimators compared & – & LS, LMMSE, Subgroup-based \\
Performance metrics & – & MSE, SER, Effective Throughput \\
\bottomrule
\end{tabular}
\end{table}

\paragraph{Baselines and protocol.}
Unless otherwise stated, all constants and hyperparameters follow
Table~\ref{tab:sim_config}.
For each configuration \((d,\text{SNR})\),
we perform \(K(d)\) independent Monte Carlo trials.
Each trial generates a channel, injects complex Gaussian noise at the target SNR,
and transmits QPSK symbols over all active tones.
Frequency–domain equalization uses a channel estimate \(\widehat H_m\).
We compare three estimators:
(i)~LS, given by \(\widehat H_{\mathrm{LS}}[k] = Y[k]/X[k]\) under AWGN~\cite{PapoulisPillai2002}; (ii)~LMMSE with the channel correlation matrix and noise variance known to the receiver~\cite{701321}
and (iii)~the proposed Subgroup–based estimator, enforcing sparsity according to \(H^\perp\).
All models share identical preprocessing and hyperparameters (\(\varepsilon = 0.15\)
) to isolate the effects of structure and SNR.

\paragraph{Performance.}
We first analyzed the performance of the algorithms in the TDL channel model.
Fig.~\ref{fig:TDL} shows that the subgroup-based estimator consistently outperforms the LS methods for all SNR range in the tested scenario. It has an MSE close to the LMMSE method, and similar SER and throughput, for SNR above 10 dB.  This test shows that, in the SNR range of interest, which is where communication systems actually operate (above 10 dB), the proposed method performs similar to the optimal method.


\begin{figure}[h!] 
  \centering 
  \begin{subfigure}[b]{0.30\textwidth} 
    \centering
    \includegraphics[width=\linewidth]{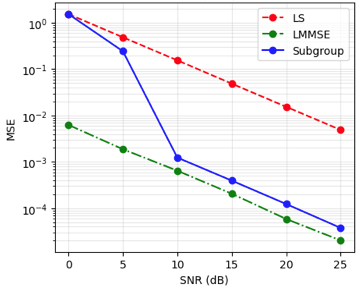}
    \caption{MSE comparison for the estimators in TDL channel model.}
    \label{fig: mse_tdl}
  \end{subfigure}
  \hfill 
  \begin{subfigure}[b]{0.30\textwidth}
    \centering
    \includegraphics[width=\linewidth]{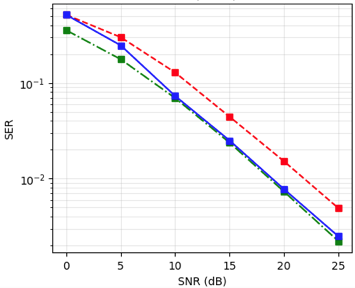}
    \caption{SER comparison for the estimators in TDL channel model.}
    \label{fig: ser_tdl}
  \end{subfigure}
  \hfill
  \begin{subfigure}[b]{0.30\textwidth}
    \centering
    \includegraphics[width=\linewidth]{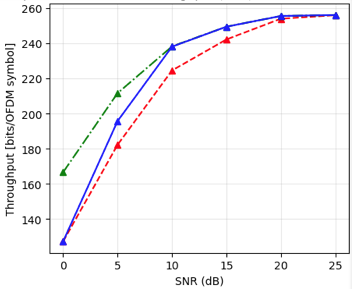}
    \caption{THROUGHPUT comparison for the estimators in TDL channel model.}
    \label{fig:throughput_tdl}
  \end{subfigure}
  \caption{Algorithm performance in TDL channel model.}
  \label{fig:TDL}
\end{figure}

Right away, we analyzed the performance of the algorithms in the ITU channel model.
Fig.~\ref{fig:ITU} shows that the subgroup-based estimator, even in the worst-case scenario, continues outperforms the LS methods for all SNR range. Despite the MSE remaining distant to the LMMSE method, it's SER and throughput get closer to the LMMSE method when the SNR exceeds 15 dB. This test shows that, even in the worst-case scenario, the proposed method performs well with low computational complexity.


\begin{figure}[h!] 
  \centering 
  \begin{subfigure}[b]{0.30\textwidth} 
    \centering
    \includegraphics[width=\linewidth]{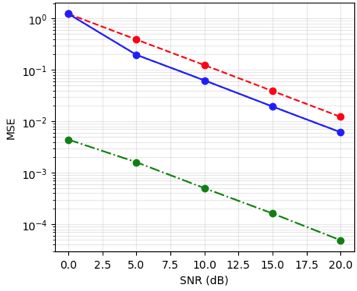}
    \caption{MSE comparison for the estimators in ITU channel model.}
    \label{fig: mse_itu}
  \end{subfigure}
  \hfill 
  \begin{subfigure}[b]{0.30\textwidth}
    \centering
    \includegraphics[width=\linewidth]{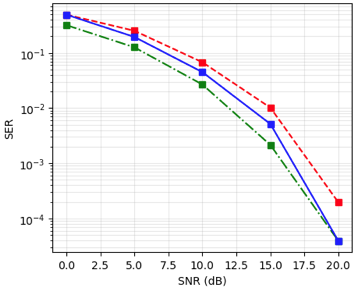}
    \caption{SER comparison for the estimators in ITU channel model.}
    \label{fig: ser_itu}
  \end{subfigure}
  \hfill
  \begin{subfigure}[b]{0.30\textwidth}
    \centering
    \includegraphics[width=\linewidth]{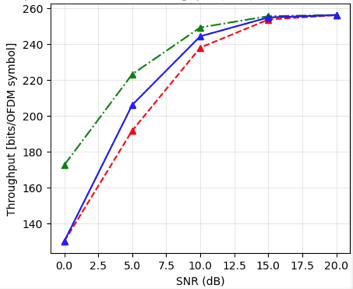}
    \caption{THROUGHPUT comparison for the estimators in ITU channel model.}
    \label{fig:throughput_itu}
  \end{subfigure}
  \caption{Algorithm performance in ITU channel model.}
  \label{fig:ITU}
\end{figure}

\section{Conclusion}
\label{sec:conclusion}

This work introduced a group-theoretic framework for structured channel estimation in OFDM systems.  
By representing subcarriers as the cyclic group \(\mathbb{Z}_N\) and exploiting the annihilator relation between subgroups \(H\) and \(H^\perp\),  
we established a direct correspondence between frequency-domain nulling and time-domain sparsity.  
The proposed estimator leverages this structure through a simple energy-based criterion, achieving substantial gains in mean squared error, bit error rate, and throughput compared with classical and data-driven baselines.  
These results confirm that algebraic regularities can be systematically harnessed to improve estimation accuracy and spectral efficiency without increasing model complexity.  
Future work will extend the framework to non-abelian group structures and multi-antenna scenarios, where richer symmetries may further enhance performance in dynamic wireless environments.  
Overall, the results suggest that symmetry-aware estimators provide a promising direction for next-generation wireless systems.

\begin{appendices}

\end{appendices}


\bibliography{sn-bibliography}

@ARTICLE{DiRenzo2020,
  author={Di Renzo, Marco and Zappone, Alessio and Debbah, Merouane and Alouini, Mohamed-Slim and Yuen, Chau and de Rosny, Julien and Tretyakov, Sergei},
  journal={IEEE Journal on Selected Areas in Communications}, 
  title={Smart Radio Environments Empowered by Reconfigurable Intelligent Surfaces: How It Works, State of Research, and The Road Ahead}, 
  year={2020},
  volume={38},
  number={11},
  pages={2450-2525},
  doi={10.1109/JSAC.2020.3007211}
}

@ARTICLE{Coleri2002,
  author={Coleri, S. and Ergen, M. and Puri, A. and Bahai, A.},
  journal={IEEE Transactions on Broadcasting}, 
  title={Channel estimation techniques based on pilot arrangement in OFDM systems}, 
  year={2002},
  volume={48},
  number={3},
  pages={223-229},
  doi={10.1109/TBC.2002.804034}
}

@BOOK{LiStuber2006,
  editor       = {Ye Li and Gordon L. St\"uber},
  title        = {Orthogonal Frequency Division Multiplexing for Wireless Communications},
  series       = {Signals and Communication Technology},
  publisher    = {Springer},
  address      = {New York, NY},
  year         = {2006},
  edition      = {1},
  pages        = {XII, 308},
  doi          = {10.1007/0-387-30235-2},
  isbn         = {978-0-387-29095-9}
}

@book{Goldsmith_2005,
  author       = {Goldsmith, Andrea},
  title        = {Wireless Communications},
  publisher    = {Cambridge University Press},
  address      = {Cambridge},
  year         = {2005},
  isbn         = {9780511841224},
  doi          = {10.1017/CBO9780511841224}
}

@ARTICLE{Barhumi2003,
  author={Barhumi, I. and Leus, G. and Moonen, M.},
  journal={IEEE Transactions on Signal Processing}, 
  title={Optimal training design for MIMO OFDM systems in mobile wireless channels}, 
  year={2003},
  volume={51},
  number={6},
  pages={1615-1624},
  doi={10.1109/TSP.2003.811243}
}

@book{HeathLozano2018,
  author    = {Robert W. Heath Jr. and Angel Lozano},
  title     = {Foundations of MIMO Communication},
  publisher = {Cambridge University Press},
  year      = {2018},
  address   = {Cambridge},
  isbn      = {9781139049276},
  doi       = {10.1017/9781139049276},
  note      = {Online publication December 2018}
}

@book{VanTrees2002,
  author    = {Harry L. Van Trees},
  title     = {Optimum Array Processing: Part IV of Detection, Estimation, and Modulation Theory},
  publisher = {John Wiley \& Sons},
  year      = {2002},
  address   = {New York},
  isbn      = {9780471093909},
  doi       = {10.1002/0471221104}
}

@ARTICLE{Basar2019,
  author={Basar, Ertugrul and Di Renzo, Marco and De Rosny, Julien and Debbah, Merouane and Alouini, Mohamed-Slim and Zhang, Rui},
  journal={IEEE Access}, 
  title={Wireless Communications Through Reconfigurable Intelligent Surfaces}, 
  year={2019},
  volume={7},
  pages={116753-116773},
  doi={10.1109/ACCESS.2019.2935192}
}

@ARTICLE{Bajwa2010,
  author={Bajwa, Waheed U. and Haupt, Jarvis and Sayeed, Akbar M. and Nowak, Robert},
  title={Compressed Channel Sensing: A New Approach to Estimating Sparse Multipath Channels},
  journal={Proceedings of the IEEE},
  year={2010},
  volume={98},
  number={6},
  pages={1058-1076},
  doi={10.1109/JPROC.2010.2042415}
}

@book{Cho2010,
  author    = {Yong Soo Cho and Jaekwon Kim and Won Young Yang and Chung G. Kang},
  title     = {MIMO-OFDM Wireless Communications with MATLAB},
  publisher = {John Wiley \& Sons},
  year      = {2010},
  address   = {Hoboken, NJ},
  isbn      = {9780470825624}
}

@ARTICLE{Ai2014,
  author={Ai, Bo and Cheng, Xiang and Kürner, Thomas and Zhong, Zhang-Dui and Guan, Ke and He, Rui-Si and Xiong, Lei and Matolak, David W. and Michelson, David G. and Briso-Rodriguez, Cesar},
  journal={IEEE Transactions on Intelligent Transportation Systems}, 
  title={Challenges toward wireless communications for high-speed railway}, 
  year={2014},
  volume={15},
  number={5},
  pages={2143-2158},
  doi={10.1109/TITS.2014.2310771}
}

@ARTICLE{Chafii2023,
  author={Chafii, Marwa and Bariah, Lina and Muhaidat, Sami and Debbah, Merouane},
  journal={IEEE Communications Surveys \& Tutorials}, 
  title={Twelve Scientific Challenges for 6G: Rethinking the Foundations of Communications Theory}, 
  year={2023},
  volume={25},
  number={2},
  pages={868-904},
  doi={10.1109/COMST.2022.3224532}
}

@INPROCEEDINGS{Tong2022,
  author={Tong, Wen and Zhu, Peiying},
  booktitle={2022 IEEE 96th Vehicular Technology Conference (VTC2022-Fall)}, 
  title={6G: The Next Horizon}, 
  year={2022},
  pages={1-5},
  doi={10.1109/VTC2022-Fall57202.2022.10012887}
}

@INPROCEEDINGS{Hadani2017,
  author={Hadani, Ron and Rakib, Shlomo and Tsatsanis, Michail and Monk, Andy and Goldsmith, Andrea J. and Molisch, Andreas F. and Calderbank, Robert},
  booktitle={2017 IEEE Wireless Communications and Networking Conference (WCNC)}, 
  title={Orthogonal Time Frequency Space Modulation}, 
  year={2017},
  pages={1-6},
  doi={10.1109/WCNC.2017.7925924}
}

@ARTICLE{Ye2017,
  author={Ye, Hao and Li, Geoffrey Ye and Juang, Borching},
  journal={IEEE Wireless Communications Letters}, 
  title={Power of Deep Learning for Channel Estimation and Signal Detection in OFDM Systems}, 
  year={2017},
  volume={7},
  number={1},
  pages={114-117},
  doi={10.1109/LWC.2017.2757490}
}

@ARTICLE{Honkala2021,
  author={Honkala, Mikko and Korpi, Dani and Huttunen, Janne M. J.},
  journal={IEEE Transactions on Wireless Communications}, 
  title={DeepRx: Fully Convolutional Deep Learning Receiver}, 
  year={2021},
  volume={20},
  number={6},
  pages={3925-3940},
  doi={10.1109/TWC.2021.3054520}
}

@ARTICLE{Aoudia2021,
  author={Aoudia, Fayçal Ait and Hoydis, Jakob},
  journal={IEEE Transactions on Wireless Communications}, 
  title={End-to-End Learning for OFDM: From Neural Receivers to Pilotless Communication}, 
  year={2021},
  volume={21},
  number={2},
  pages={1049-1063},
  doi={10.1109/TWC.2021.3101306}
}

@book{PapoulisPillai2002,
  author    = {Athanasios Papoulis and S. Unnikrishna Pillai},
  title     = {Probability, Random Variables, and Stochastic Processes},
  publisher = {McGraw-Hill},
  year      = {2002},
  address   = {New York},
  series    = {McGraw-Hill Series in Electrical and Computer Engineering},
  edition   = {4},
  isbn      = {9780073660110}
}

@book{Sayed2003,
  author    = {Sayed, Ali H.},
  title     = {Fundamentals of Adaptive Filtering},
  publisher = {John Wiley \& Sons},
  year      = {2003},
  address   = {Hoboken, NJ}
}

@article{Molisch2012,
  title={Modeling the frequency dependence of ultra-wideband spatio-temporal indoor radio channels},
  author={Haneda, Katsuyuki and Richter, Andreas and Molisch, Andreas F},
  journal={IEEE transactions on antennas and propagation},
  volume={60},
  number={6},
  pages={2940--2950},
  year={2012},
  publisher={IEEE}
}

@book{durgin2003,
  author    = {Gregory D. Durgin},
  title     = {Space-time wireless channels},
  publisher = {Prentice Hall},
  address   = {Upper Saddle River, NJ},
  year      = {2003},
  isbn      = {013065647X}
}

@article{Rusek2013,
  title={Scaling up MIMO: Opportunities and challenges with very large arrays, IEEE Signal Process},
  author={Rusek, F and Persson, D and Lau, BK and Larsson, EG and Marzetta, TL and Edfors, O and Tufvesson, F},
  journal={Mag},
  volume={30},
  number={1},
  pages={40--60},
  year={2013}
}

@inproceedings{Tang2021,
  title={Use of intelligent reflecting surfaces for and against wireless communication security},
  author={Sarp, Salih and Tang, Haolin and Zhao, Yanxiao},
  booktitle={2021 IEEE 4th 5G World Forum (5GWF)},
  pages={374--377},
  year={2021},
  organization={IEEE}
}

@article{Zhou2020,
  title={Channel modeling and characteristics for 6G wireless communications},
  author={Jiang, Hao and Mukherjee, Mithun and Zhou, Jie and Lloret, Jaime},
  journal={IEEE Network},
  volume={35},
  number={1},
  pages={296--303},
  year={2020},
  publisher={IEEE}
}

@article{Wu2022,
  title={Intelligent reflecting surface enabled multi-target sensing},
  author={Meng, Kaitao and Wu, Qingqing and Schober, Robert and Chen, Wen},
  journal={IEEE Transactions on Communications},
  volume={70},
  number={12},
  pages={8313--8330},
  year={2022},
  publisher={IEEE}
}

@article{Stojanovic2008,
  title={Multicarrier communication over underwater acoustic channels with nonuniform Doppler shifts},
  author={Li, Baosheng and Zhou, Shengli and Stojanovic, Milica and Freitag, Lee and Willett, Peter},
  journal={IEEE Journal of Oceanic Engineering},
  volume={33},
  number={2},
  pages={198--209},
  year={2008},
  publisher={IEEE}
}

@techreport{ITU-R:M.1225,
  author = {{ITU-R}},
  title = {{Guidelines for evaluation of radio transmission technologies for IMT-2000}},
  institution = {{International Telecommunication Union}},
  year = {1997},
  number = {M.1225},
  url = {https://www.itu.int/rec/R-REC-M.1225-0-199702-I/en}
}

@ARTICLE{9305216,
  author={Qiu, Hao and García-Loygorri, Juan Moreno and Guan, Ke and He, Danping and Xu, Ziheng and Ai, Bo and Berbineau, Marion},
  journal={IEEE Access}, 
  title={Emulation of Radio Technologies for Railways: A Tapped-Delay-Line Channel Model for Tunnels}, 
  year={2021},
  volume={9},
  number={},
  pages={1512-1523},
  keywords={Emulation;Radio access technologies;Physical layer;Rail transportation;Safety;Channel models;Public transportation;Channel modeling;propagation;railway communications;tapped-delay-line models;tunnel},
  doi={10.1109/ACCESS.2020.3046852}}

@ARTICLE{701321,
  author={Edfors, O. and Sandell, M. and van de Beek, J.-J. and Wilson, S.K. and Borjesson, P.O.},
  journal={IEEE Transactions on Communications}, 
  title={OFDM channel estimation by singular value decomposition}, 
  year={1998},
  volume={46},
  number={7},
  pages={931-939},
  keywords={OFDM;Channel estimation;Singular value decomposition;Frequency estimation;Discrete Fourier transforms;Frequency division multiplexing;Signal design;Robustness;Signal to noise ratio;Amplitude modulation},
  doi={10.1109/26.701321}}

\end{document}